\documentclass[letterpaper, 10 pt, conference]{ieeeconf}  

\IEEEoverridecommandlockouts                              
\usepackage{amsmath,amssymb,amsfonts}
\usepackage{mathtools}
\usepackage{mathrsfs}
\usepackage{bm}

\usepackage{algorithmic}
\usepackage{graphicx} 
\usepackage{color} 
\usepackage{textcomp}

\newcommand{\BA}{\bm{A}}
\newcommand{\BB}{\bm{B}}
\newcommand{\BJ}{\bm{J}}
\newcommand{\BW}{\bm{W}}
\newcommand{\BX}{\bm{X}}

\newcommand{\Bc}{\bm{c}}
\newcommand{\Bd}{\bm{d}}
\newcommand{\Bf}{\bm{f}}
\newcommand{\Bn}{\bm{n}}

\newcommand{\Bx}{\bm{x}}
\newcommand{\BLambda}{\boldsymbol{\Lambda}}
\newcommand{\Bmu}{\boldsymbol{\mu}}
\newcommand{\calF}{\mathcal{F}}
\newcommand{\calJ}{\mathcal{J}}
\newcommand{\calL}{\mathcal{L}}

\newcommand{\scrL}{\mathscr{L}}

\newcommand{\diag}{\mathrm{diag}}
\newcommand{\real}{\mathrm{Re}}
\newcommand{\vecnorm}[1]{\left\| #1 \right\|}
\newcommand{\vecinner}[2]{#1 \cdot #2}
\newcommand{\funcnorm}[2][]{\left| #2 \right|_{#1}}
\newcommand{\funcinner}[3][]{\left\langle #2,#3 \right\rangle_{#1}}
\newcommand{\pd}[2][]{\partial_{#2}^{#1}}

\newtheorem{thm}{Theorem}

\title{\LARGE 
\textbf{Optimal Control for Steady Circulation of a Diffusion Process via Spectral Decomposition of Fokker--Planck Equation}
}

\author{Norihisa Namura and Hiroya Nakao
\thanks{N. Namura is with the Department of Systems and Control Engineering, Institute of Science Tokyo, Tokyo, Japan
        {\tt\small (namura.n.aa@m.titech.ac.jp)}}%
\thanks{H. Nakao is with the Department of Systems and Control Engineering and Research Center for Autonomous Systems Materialogy, Institute of Science Tokyo, Tokyo, Japan
        {\tt\small (nakao@sc.e.titech.ac.jp)}}%
\thanks{N. Namura acknowledges the support from JSPS KAKENHI (No. 25KJ1270).
H. Nakao acknowledges the support from JSPS KAKENHI (Nos. 25H01468, 25K03081, and 22H00516).}
}

\begin{document}

\maketitle
\thispagestyle{empty}
\pagestyle{empty}
\bstctlcite{BSTcontrol}

\begin{abstract}
We present a formulation of an optimal control problem for a two-dimensional diffusion process governed by a Fokker--Planck equation to achieve a nonequilibrium steady state with a desired circulation while accelerating convergence toward the stationary distribution. 
To achieve the control objective, we introduce costs for both the probability density function and flux rotation to the objective functional.
We formulate the optimal control problem through dimensionality reduction of the Fokker--Planck equation via eigenfunction expansion, which requires a low-computational cost.
We demonstrate that the proposed optimal control achieves the desired circulation while accelerating convergence to the stationary distribution through numerical simulations.
\end{abstract}

\section{INTRODUCTION}
\label{sec:introduction}

In the real world, a wide range of systems are affected by stochastic noise. 
Examples include nerve signal transmission by stochastically opening and closing ion channels~\cite{Ermentrout2010mathematical}, 
electronic circuits subjected to thermal fluctuations~\cite{Gardiner2009stochastic}, 
semiclassically approximated quantum dissipative systems~\cite{kato2019semiclassical}, 
and robots subjected to their sensor and actuator noise~\cite{Elamvazhuthi2019mean}.

Random fluctuations often appear as Brownian motion, where particles undergo irregular motion due to collisions with surrounding molecules.
The state dynamics of a Brownian particle is mathematically described by the It\^{o} stochastic differential equation and
the time evolution of the probability density function~(PDF) of the particles can be described by the corresponding Fokker--Planck equation~(FPE)~\cite{Risken1996fokker}.
Formulating control that acts on the FPE is useful~\cite{Annunziato2018fokker}
and numerous studies have formulated optimal control problems for FPEs~\cite{Annunziato2010optimal,Annunziato2013fokker,Fleig2017optimal,Breiten2023improving}.

Optimal control is a general and valuable control approach that has found wide applications, including in the control of oscillatory systems~\cite{Moehlis2006optimal,Namura2024optimal};
however, it is difficult to solve optimal control problems in infinite-dimensional systems.
Since FPE is linear, it can be analyzed using the eigenvalues and eigenfunctions~\cite{Risken1996fokker}.
By using a spectral approach to the FPE, we can approximately formulate optimal control problems, such as for fast convergence to the stationary distribution~\cite{Breiten2017reduction,Breiten2018control,Kalise2025spectral}.

In thermodynamically equilibrium systems, detailed balance ensures that no flux exists in the stationary state, such as the circulation of the particles.
Introducing circulations, however, allows us to enhance efficiency of particle mixing~\cite{Lu2024vortex} or capturing particles in target regions~\cite{Lu2024vortex,Zhang2024vortex}. 
Hence, generating circulations by driving the system in a nonequilibrium state can improve the performance and functional capabilities of the system, rather than merely achieving equilibrium state.

In this study, we formulate an optimal control problem for a two-dimensional diffusion process described by the FPE to achieve two control objectives: 
(i) accelerating the convergence of the PDF toward a stationary state, and 
(ii) generating a desired circulation when the PDF is in the stationary state.
In addition to the cost functional for accelerating convergence to the stationary distribution introduced in~\cite{Kalise2025spectral}, 
a term that accounts for the difference from the desired flux rotation (scalar vorticity~\cite{Flandoli2020fokker}) is added in the objective functional.
In the formulation, we utilize the eigenfunction expansion of the PDF with a finite number of eigenfunctions for dimensionality reduction, 
which is expected to significantly reduce the computational cost during optimization.
Through numerical simulations, we demonstrate that the formulated optimal control problem allows the PDF to converge quickly to its stationary state, while simultaneously generating the desired circulation.

This study is organized as follows.
We first introduce a diffusion process and the corresponding FPE in Sec.~\ref{sec:model}.
In Sec.~\ref{sec:formulation}, we propose an optimal control scheme to accelerate convergence to the stationary distribution while generating a desired circulation based on the spectral dimensionality reduction of the FPE. 
In Sec.~\ref{sec:results}, we demonstrate that the control objective is achieved by the optimal control compared to the uncontrolled case through numerical simulations.
We conclude this study in Sec.~\ref{sec:conclusion}.

\section{FOKKER--PLANCK EQUATION}
\label{sec:model}

We consider a two-dimensional diffusion process~\mbox{$\{ \BX_{t} \}_{t \geq 0}$} on a simply connected compact set \mbox{$\Omega \subset \mathbb{R}^{2}$} with a smooth boundary~$\Gamma$, 
where each state variable is independently subjected to white Gaussian noise.
This diffusion process can be described by the following It\^{o} stochastic differential equation~(SDE):
\begin{align}
\label{eq:sde}
d\BX_{t} = -\nabla V(\BX_{t})dt + \sqrt{2D} d\BW_{t},
\end{align}
where 
\mbox{$t \in \mathbb{R}_{+} = [0,\infty)$} represents the time,
\mbox{$V: \Omega \to \mathbb{R}$} is a smooth potential function,
\mbox{$D > 0$} is a diffusion constant,
\mbox{$\nabla = [\pd{x}~\pd{y}]^{\top}$} represents the gradient for the spatial variable~\mbox{$\Bx = \left[ x~y \right]^{\top} \in \Omega$},
\mbox{$\pd{x} = \partial/\partial x$} represents the partial derivative with respect to~$x$,
\mbox{``$\top$''} represents the transposition of a vector or matrix,
and \mbox{$\{ \BW_{t} \}_{t \geq 0}$} represents a standard two-dimensional Brownian motion.
The time evolution of the probability density function~(PDF)~\mbox{$\rho: \Omega \times \mathbb{R}_{+} \to \mathbb{R}_{+}$} of $\BX_{t}$ is described by the Fokker--Planck equation~(FPE):
\begin{align}
\label{eq:fpe}
\pd{t} \rho(\Bx,t) = \scrL^{*}\rho = \vecinner{\nabla}{(\rho(\Bx,t) \nabla V(\Bx))} + D \Delta \rho(\Bx,t)
\end{align}
by using a linear operator~$\scrL^{*} = \nabla \cdot \nabla V + D \Delta$,
where $\vecinner{\bm{a}}{\bm{b}} = a_{1}b_{1} + a_{2}b_{2}$ is the scalar product of two-dimensional vectors~$\bm{a}$, $\bm{b}$
and $\Delta = \vecinner{\nabla}{\nabla}$ is the Laplacian operator.
We then define a probability flux~$\BJ: \Omega \times \mathbb{R}_{+} \to \mathbb{R}^{2}$ to satisfy
\begin{align}
\scrL^{*} \rho(\Bx,t) = -\vecinner{\nabla}{\BJ(\Bx,t)}.
\end{align}
For the flux $\BJ$, we consider the reflecting boundary condition on the boundary~$\Gamma$ at any~$t$, i.e., 
\begin{align}
\label{eq:no_flux}
\vecinner{\BJ(\Bx,t)}{\Bn} = 0 \quad \mathrm{on} \quad \Gamma \times \mathbb{R}_{+},
\end{align}
where $\Bn$ represents the normal vector perpendicular to $\Gamma$.

In equilibrium systems,
the stationary flux~$\BJ_{\mathrm{s}}$ 
satisfies the detailed balance condition~\mbox{$\BJ_{\mathrm{s}}(\Bx) = \bm{0}$} for any \mbox{$\Bx \in \Omega$}
when the PDF is stationary, \mbox{$\rho(\Bx,t) = \rho_{\mathrm{s}}(\Bx)$}.
In this case, the linear operator~$\scrL^{*}$ is self-adjoint with respect to the weighted square-integrable function space~\mbox{$L^{2}\left(\Omega;\rho_{\mathrm{s}}^{-1}\right)$}~\cite{Pavliotis2014stochastic,Breiten2018control,Kalise2025spectral}.
The stationary distribution is given by 
\begin{align}
\label{eq:stationary}
\rho_{\mathrm{s}}(\Bx) = \frac{1}{Z} \exp\left( - \frac{V(\Bx)}{D} \right),
\end{align}
where $Z = \int_{\Omega} \exp\left( - \frac{V(\Bx)}{D} \right) d\Bx$ is a normalization constant and $d\Bx = dxdy$ is the integration measure.

We can consider the set of the eigenvalues and eigenfunctions~\mbox{$\{ (\lambda_{m}, v_{m}) \},\; (m = 0,1,2,\dots)$} of $\scrL^{*}$ that satisfies
\begin{align}
\scrL^{*} v_{m} = \lambda_{m} v_{m},
\end{align}
where $\{ \lambda_{m} \}$ are sorted as \mbox{$\real~\lambda_{0} \geq \real~\lambda_{1} \geq \cdots$}.
The eigenfunctions~$\{ v_{m} \}$ can be orthonormalized on \mbox{$L^{2}\left(\Omega;\rho_{\mathrm{s}}^{-1}\right)$} as follows:
\begin{align}
\funcinner[L^{2}\left(\Omega;\rho_{\mathrm{s}}^{-1}\right)]{v_{m}}{v_{k}} = \delta_{mk},
\end{align}
where $\funcinner[L^{2}\left(\Omega;\rho_{\mathrm{s}}^{-1}\right)]{p}{q} = \int p(\Bx) \overline{q(\Bx)} \rho_{\mathrm{s}}^{-1}(\Bx) d\Bx$ is the inner product of two functions $p,q$ on \mbox{$L^{2}\left(\Omega;\rho_{\mathrm{s}}^{-1}\right)$},
the overline represents the complex conjugate,
and $\delta_{mk}$ is the Kronecker delta.
Since the stationary distribution satisfies \mbox{$\scrL^{*} \rho_{\mathrm{s}} = 0$}, $\rho_{\mathrm{s}}$ is the eigenfunction that corresponds to the zero eigenvalue,
i.e., \mbox{$\lambda_{0} = 0$} and \mbox{$v_{0} = \rho_{\mathrm{s}}$}.

\section{FORMULATION OF OPTIMAL CONTROL}
\label{sec:formulation}

\subsection{Control Objective}

Our control objective is to accelerate the convergence speed of the PDF toward the stationary state and generate a desired circulation from the initial time~\mbox{$t_{0} \geq 0$} until the final time \mbox{$t_{\mathrm{f}} > t_{0}$}.
For this purpose, we consider a diffusion process with control~$\{ \BX_{t} \}_{t \geq 0}$ described by the following It\^{o} SDE~\eqref{eq:sde_input} whose drift term receives the control inputs~\mbox{$u_{1,2} \in L^{2}([t_{0},t_{\mathrm{f}}])$} with control shape functions~\cite{Breiten2018control}~$\alpha$ and $\phi$:
\begin{align}
\label{eq:sde_input}
d\BX_{t} ={}& \left( - u_{1}(t)\nabla \alpha(\BX_{t}) - u_{2}(t)\frac{1}{\rho_{\mathrm{s}}(\BX_{t})}\nabla^{\bot}\phi(\BX_{t}) \right) dt \cr
&-\nabla V(\BX_{t})dt + \sqrt{2D}dW_{t},
\end{align}
where \mbox{$\nabla^{\bot} = [\pd{y}~-\pd{x}]$} is a partial derivative operator that is perpendicular to $\nabla$.
The term with $u_{1}$ mainly serves to accelerate the convergence, whereas the term with $u_{2}$ generates a circulation in the steady state.
Note that the control inputs~$u_{1,2}$ depend only on time and regulate the intensity of the smooth control shape functions~\mbox{$\alpha: \Omega \to \mathbb{R}$} and \mbox{$\phi: \Omega \to \mathbb{R}$}, respectively.
We assume that the controlled SDE~\eqref{eq:sde_input} admits a unique solution.

The FPE corresponding to the SDE~\eqref{eq:sde_input} is given by 
\begin{align}
\label{eq:fpe_control}
\pd{t} \rho ={}& \vecinner{\nabla}{\left( \rho \left(\nabla V + u_{1}\nabla \alpha + u_{2}\frac{1}{\rho_{\mathrm{s}}}\nabla^{\bot}\phi \right) \right)} + D \Delta \rho
\end{align}
and the probability flux~$\tilde{\BJ}$ under the control is described by 
\begin{align}
\tilde{\BJ} = -\rho\left( \nabla V + u_{1}\nabla \alpha + u_{2}\frac{1}{\rho_{\mathrm{s}}}\nabla^{\bot}\phi \right) - D \nabla \rho.
\end{align}
Here, we assume that the control shape functions satisfy the following boundary conditions:
\begin{align}
\vecinner{\nabla \alpha}{\Bn} &= 0 \quad \mathrm{on} \quad \Gamma, \\
\vecinner{\nabla^{\bot} \phi}{\Bn} &= 0 \quad \mathrm{on} \quad \Gamma,
\end{align}
in order that $\tilde{\BJ}$ satisfies the reflecting boundary condition even under the control, i.e., 
\begin{align}
\label{eq:no_flux_control}
\vecinner{\tilde{\BJ}(\Bx,t)}{\Bn} = 0 \quad \mathrm{on} \quad \Gamma \times \mathbb{R}_{+}.
\end{align}
We assume the well-posedness of the controlled FPE~\eqref{eq:fpe_control} with the boundary condition~\eqref{eq:no_flux_control}.

The control objective includes fast convergence to the stationary distribution~$\rho_{\mathrm{s}}$ of the uncontrolled system.
Here, we consider that the system to have converged to the stationary state when $\rho$ is sufficiently close to $\rho_{\mathrm{s}}$ within a finite time.
The stationary flux~$\tilde{\BJ}_{\mathrm{s}}$ in the stationary state with \mbox{$\rho = \rho_{\mathrm{s}}$} should satisfy the following condition for all \mbox{$\Bx \in \Omega$}, i.e., 
\begin{align}
\label{eq:stationary_condition}
\vecinner{\nabla}{\tilde{\BJ}_{\mathrm{s}}(\Bx)} = 0.
\end{align}
For this condition to hold, it is sufficient that the following two equations are satisfied:
\begin{align}
\label{eq:alpha_condition}
u_{1} \vecinner{\nabla}{\left( \rho_{\mathrm{s}}\nabla \alpha \right)} &= 0, \\
\label{eq:phi_condition}
u_{2} \vecinner{\nabla}{\left( \nabla^{\bot} \phi \right)} &= 0.
\end{align}
In general, \mbox{$\vecinner{\nabla}{(\rho_{\mathrm{s}}\nabla \alpha)}$} does not vanish
but it is sufficient that the condition~\eqref{eq:alpha_condition} is satisfied when \mbox{$u_{1} = 0$},
because $u_{1}$ becomes practically zero after the sufficient convergence of $\rho$ to $\rho_{\mathrm{s}}$.
Meanwhile, the condition~\eqref{eq:phi_condition} is always satisfied because $\vecinner{\nabla}{\left( \nabla^{\bot} \phi \right)} = 0$ holds for any smooth function~$\phi$:
%
%
Therefore, when \mbox{$u_{1} = 0$}, $\rho_{\mathrm{s}}$ is still a stationary distribution of the FPE~\eqref{eq:fpe_control} even for \mbox{$u_{2} \neq 0$}.

To generate a desired circulation, we control the system so that the flux rotation~\mbox{$\omega(\Bx,t) = \nabla \times \tilde{\BJ}(\Bx,t)$} reaches a desired flux rotation~$\omega_{\mathrm{d}}(\Bx)$ in the steady state with $\rho = \rho_{\mathrm{s}}$. 
The term~\mbox{$\nabla \times \tilde{\BJ}(\Bx,t)$} yields a scalar quantity in two dimensions and satisfies \mbox{$\nabla \times \tilde{\BJ}(\Bx,t) = -\vecinner{\nabla^{\bot}}{\tilde{\BJ}(\Bx,t)}$}.
Here, the flux rotation satisfies \mbox{$\int_{\Omega} \omega(\Bx,t) d\Bx = 0$} at any~$t$ and we further assume that \mbox{$\omega(\cdot,t) \in L^{2}\left(\Omega;\rho_{\mathrm{s}}^{-1}\right)$}.
Since we can assume \mbox{$u_{1} = 0$} when \mbox{$\rho = \rho_{\mathrm{s}}$}, 
the stationary flux for a constant $u_{2}$ satisfies \mbox{$\tilde{\BJ}_{\mathrm{s}} = -u_{2}\nabla^{\bot} \phi$} and 
the stationary flux rotation satisfies $\omega_{\mathrm{s}} = \nabla \times \tilde{\BJ}_{\mathrm{s}} = u_{2} \vecinner{\nabla^{\bot}}{\nabla^{\bot} \phi} = u_{2} \Delta \phi$, whose shape depends only on $\phi$.
Thus, we can achieve any smooth desired flux rotation~\mbox{$\omega_{\mathrm{d}} = \Delta \phi$} by appropriately designing the shape function~$\phi$.
In this case, the control objective can be set as $u_{2} = 1$.

In this study, we assume the controllability of the system to the nonequilibrium steady state
characterized by the stationary distribution~$\rho_{\mathrm{s}}$ and the desired flux rotation~$\omega_{\mathrm{d}}$.
We note that the detailed balance condition does not hold in the steady state under the control,
but we can discuss the controlled FPE~\eqref{eq:fpe_control} using the fact that the detailed balance condition~\mbox{$\BJ_{\mathrm{s}} = \bm{0}$} holds in the original uncontrolled system.

\subsection{Spectral Dimensionality Reduction}
We consider the eigenfunction expansion of the PDF~$\rho$ using the eigenfunctions $\{ v_{m} \}$ of the linear operator~$\scrL^{*}$ as
\begin{align}
\label{eq:eigenfunction_expansion}
\rho(\Bx,t) = \sum_{m=0}^{\infty} c_{m}(t) v_{m}(\Bx),
\end{align}
where each coefficient~$c_{m}$ can be obtained by 
\begin{align}
\label{eq:coefficient}
c_{m}(t) = \funcinner[L^{2}\left(\Omega;\rho_{\mathrm{s}}^{-1}\right)]{\rho(\cdot,t)}{v_{m}}.
\end{align}
By defining $\zeta_{1,2}$ as 
\begin{align}
\zeta_{1}(\Bx,t) &= \vecinner{\nabla}{(\rho(\Bx,t) \nabla \alpha(\Bx))}, \\
\zeta_{2}(\Bx,t) &= \vecinner{\nabla}{\left( \rho(\Bx,t)\frac{1}{\rho_{\mathrm{s}}(\Bx)}\nabla^{\bot}\phi(\Bx) \right)},
\end{align}
the controlled FPE~\eqref{eq:fpe_control} can be rewritten as 
\begin{align}
\pd{t} \rho ={}& \scrL^{*} \rho + u_{1}\zeta_{1} + u_{2}\zeta_{2} \cr
\coloneqq{}& \calF\{ \rho,u_{1},u_{2} \},
\end{align}
where $\calF$ describes the dynamics of the controlled FPE.
The dynamics of each coefficient~$c_{m}$ can then be described by the following ordinary differential equation~(ODE):
\begin{align}
\dot{c}_{m} ={}& \lambda_{m}c_{m} + u_{1} \funcinner[L^{2}\left(\Omega;\rho_{\mathrm{s}}^{-1}\right)]{\zeta_{1}(\cdot,t)}{v_{m}} \cr
&+ u_{2} \funcinner[L^{2}\left(\Omega;\rho_{\mathrm{s}}^{-1}\right)]{\zeta_{2}(\cdot,t)}{v_{m}} \cr
={}& \lambda_{m}c_{m} + u_{1} \sum_{k=0}^{\infty} c_{k} \funcinner[L^{2}\left(\Omega;\rho_{\mathrm{s}}^{-1}\right)]{\vecinner{\nabla}{(v_{k} \nabla \alpha)}}{v_{m}} \cr
&+ u_{2} \sum_{k=0}^{\infty} c_{k} \funcinner[L^{2}\left(\Omega;\rho_{\mathrm{s}}^{-1}\right)]{\vecinner{\nabla}{\left( \frac{v_{k}}{\rho_{\mathrm{s}}} \nabla^{\bot}\phi \right)}}{v_{m}},~~~
\end{align}
where $\dot{c}_{m}$ represents the time derivative of $c_{m}$.

Hereafter, we consider a finite set of $M$~eigenvalues~$\{ \lambda_{m} \}_{m=0}^{M-1}$ and corresponding eigenfunctions~$\{ v_{m} \}_{m=0}^{M-1}$
for the purpose of dimensionality reduction.
We define a coefficient vector and an eigenvalue matrix by
\begin{align}
\Bc &= 
\begin{bmatrix}
c_{0} & c_{1} & \cdots & c_{M-1}
\end{bmatrix}
^{\top} \in \mathbb{R}^{M}, \\
\BLambda &= \mathrm{diag}(\lambda_{0},\lambda_{1},\dots,\lambda_{M-1}) \in \mathbb{R}^{M \times M}, 
\end{align}
respectively, and introduce matrices~\mbox{$\BB_{1,2} \in \mathbb{R}^{M \times M}$},
where the $(m,k)$-components~$B_{1,2}^{(mk)}$ of the matrices are given by
\begin{align}
B_{1}^{(mk)} &= \funcinner[L^{2}\left(\Omega;\rho_{\mathrm{s}}^{-1}\right)]{\vecinner{\nabla}{(v_{k} \nabla \alpha)}}{v_{m}}, \\
B_{2}^{(mk)} &= \funcinner[L^{2}\left(\Omega;\rho_{\mathrm{s}}^{-1}\right)]{\vecinner{\nabla}{\left( \frac{v_{k}}{\rho_{\mathrm{s}}} \nabla^{\bot} \phi \right)}\!}{v_{m}},
\end{align}
respectively.
Finally, the controlled FPE~\eqref{eq:fpe_control} can be approximated by the following ODE:
\begin{align}
\label{eq:state_equation}
\dot{\Bc} ={}& \BLambda\Bc + u_{1}\BB_{1}\Bc + u_{2}\BB_{2}\Bc \cr
\coloneqq{}& \Bf(\Bc,u_{1},u_{2}).
\end{align}
The ODE~\eqref{eq:state_equation} driven by the vector field~$\Bf$ is a bilinear system with the state~$\Bc$ and control inputs~$u_{1,2}$.

\subsection{Optimization}

To achieve the control objective, we introduce a stage cost functional~$\calL_{1}$ for the PDF, which is described using a weight~\mbox{$Q_{1} > 0$} as follows:
\begin{align}
\calL_{1}[\rho] ={}& \frac{Q_{1}}{2} \funcnorm[L^{2}\left(\Omega;\rho_{\mathrm{s}}^{-1}\right)]{\rho(\cdot,t) - \rho_{\mathrm{s}}}^{2},
\end{align}
where $\funcnorm[L^{2}\left(\Omega;\rho_{\mathrm{s}}^{-1}\right)]{\cdot}$ represents the \mbox{$L^{2}\left(\Omega;\rho_{\mathrm{s}}^{-1}\right)$} norm.
In addition, we introduce a stage cost functional~$\calL_{2}$ for the flux rotation, which is described using a weight~\mbox{$Q_{2} > 0$} as follows:
\begin{align}
\calL_{2}[\omega] ={}& \frac{Q_{2}}{2} \funcnorm[L^{2}\left(\Omega;\rho_{\mathrm{s}}^{-1}\right)]{\omega(\cdot,t) - \omega_{\mathrm{d}}}^{2} \cr 
={}& \frac{Q_{2}}{2} \funcnorm[L^{2}\left(\Omega;\rho_{\mathrm{s}}^{-1}\right)]{\nabla \times \tilde{\BJ}(\cdot,t) - \omega_{\mathrm{d}}}^{2}.
\end{align}
We also introduce a terminal cost functional~$\varphi$ for the PDF and flux rotation using a similar formulation as
\begin{align}
\varphi[\rho(\cdot,t_{\mathrm{f}}),\omega(\cdot,t_{\mathrm{f}})] ={}& \frac{Q_{\mathrm{f}}}{2} \funcnorm[L^{2}\left(\Omega;\rho_{\mathrm{s}}^{-1}\right)]{\rho(\cdot,t_{\mathrm{f}}) - \rho_{\mathrm{s}}}^{2} \cr
&+ \frac{R_{\mathrm{f}}}{2} \funcnorm[L^{2}\left(\Omega;\rho_{\mathrm{s}}^{-1}\right)]{\omega(\cdot,t_{\mathrm{f}}) - \omega_{\mathrm{d}}}^{2}.~~~
\end{align}
Finally, we introduce a stage cost function~$\calL_{3}$ for the inputs~$u_{1,2}$ using weights~$R_{1,2} > 0$ as follows:
\begin{align}
\calL_{3}(u_{1}(t),u_{2}(t)) = \frac{1}{2} R_{1}u_{1}(t)^{2} + \frac{1}{2} R_{2}u_{2}(t)^{2}.
\end{align}
Hence, the objective functional $\calJ$ is represented as 
\begin{align}
\calJ[u_{1},u_{2}] ={}& \varphi[\rho(\cdot,t_{\mathrm{f}}),\omega(\cdot,t_{\mathrm{f}})] \cr
&+ \int_{t_{0}}^{t_{\mathrm{f}}} \calL_{1}[\rho] + \calL_{2}[\omega] + \calL_{3}(u_{1},u_{2}) dt.~~
\end{align}

The optimal control problem is formulated as 
\begin{align}
\label{eq:optimal_control}
\min_{u_{1},u_{2}} & \quad \calJ[u_{1},u_{2}] \cr
\mathrm{s.t.} & \quad \pd[]{t} \rho = \calF\{ \rho, u_{1}, u_{2} \}, \cr
& \quad \rho(t_{0}) = \rho_{0},
\end{align}
where $\rho_{0}$ is the initial PDF at $t_{0}$.
The proposed optimal control problem satisfies the standard conditions ensuring the existence of an optimal control, namely coercivity and weak lower semicontinuity of the cost functional together with continuity of the control-to-state mapping~\cite{Lions1971optimal,Kalise2025spectral}.

Next, we reformulate the objective functional by using the eigenfunction expansion~\eqref{eq:eigenfunction_expansion} of the PDF~$\rho$.
The stage cost for the PDF can be approximated by a function~$\tilde{\calL}_{1}$ of the coefficient vector~$\Bc$ as 
\begin{align}
& \calL_{1}[\rho] \cr
\approx{}&\tilde{\calL}_{1}(\Bc) \cr
={}& \frac{Q_{1}}{2} \funcinner[L^{2}\left(\Omega;\rho_{\mathrm{s}}^{-1}\right)]{\sum_{m=0}^{M-1} \left(c_{m} - c_{\mathrm{s}}^{(m)}\right)v_{m}}{\sum_{k=0}^{M-1} \left(c_{k} - c_{\mathrm{s}}^{(k)}\right)v_{k}} \cr
={}& \frac{Q_{1}}{2} \vecnorm{\Bc - \Bc_{\mathrm{s}}}^{2},
\end{align}
where the $m$-th component of the vector~$\Bc_{\mathrm{s}} \in \mathbb{R}^{M}$ is obtained as \mbox{$c_{\mathrm{s}}^{(m)} = \funcinner[L^{2}\left(\Omega;\rho_{\mathrm{s}}^{-1}\right)]{\rho_{\mathrm{s}}}{v_{m}}$} and 
$\vecnorm{\cdot}$ represents the Euclid norm of a $M$-dimensional vector.
The stage cost for the flux rotation can also be approximated by a function~$\tilde{\calL}_{2}$ of $\Bc$ and $u_{1,2}$ by calculating the flux~$\tilde{\BJ}$ using the eigenfunction expansion~\eqref{eq:eigenfunction_expansion}, which can be written as 
\begin{align}
\calL_{2}[\omega] 
\approx{}&\tilde{\calL}_{2}(\Bc,u_{1},u_{2}) \cr
={}& \frac{Q_{2}}{2} \vecnorm{(\BA_{1} + u_{1}(t)\BA_{2} + u_{2}(t)\BA_{3})\Bc - \Bd}^{2}.~
\end{align}
Here, each component of the matrices~\mbox{$\BA_{1,2,3} \in \mathbb{R}^{M \times M}$} and the vector~\mbox{$\Bd \in \mathbb{R}^{M}$} is represented as
\begin{align}
A_{1}^{(mk)} &= \funcinner[L^{2}\left(\Omega;\rho_{\mathrm{s}}^{-1}\right)]{\vecinner{\nabla^{\bot}v_{m}}{\nabla V}}{v_{k}}, \\
A_{2}^{(mk)} &= \funcinner[L^{2}\left(\Omega;\rho_{\mathrm{s}}^{-1}\right)]{\vecinner{\nabla^{\bot}v_{m}}{\nabla \alpha}}{v_{k}}, \\
A_{3}^{(mk)} &= \funcinner[L^{2}\left(\Omega;\rho_{\mathrm{s}}^{-1}\right)]{\vecinner{\nabla^{\bot}\frac{v_{m}}{\rho_{\mathrm{s}}}}{\nabla^{\bot} \phi + \frac{v_{m}}{\rho_{\mathrm{s}}}\Delta\phi}}{v_{k}}, \\
d^{(k)} &= \funcinner[L^{2}\left(\Omega;\rho_{\mathrm{s}}^{-1}\right)]{\omega_{\mathrm{d}}}{v_{k}},
\end{align}
respectively, because the flux~$\tilde{\BJ}$ can be approximated as 
\begin{align}
\label{eq:flux_expansion}
\tilde{\BJ} \approx{}& - \sum_{m=0}^{M-1} c_{m}v_{m} \left( \nabla V + u_{1}\nabla \alpha + u_{2}\frac{1}{\rho_{\mathrm{s}}}\nabla^{\bot}\phi \right) \cr
&- D \sum_{m=0}^{M-1} c_{m} \nabla v_{m}.
\end{align}
Since \mbox{$u_{2} = 1$} is the control objective for the flux rotation in the stationary state, the terminal cost can be approximated as 
\begin{align}
&\varphi[\rho(\cdot,t_{\mathrm{f}}),\omega(\cdot,t_{\mathrm{f}})] \cr
\approx{}& \tilde{\varphi}(\Bc(t_{\mathrm{f}}),u_{2}(t_{\mathrm{f}})) \cr
={}& \frac{Q_{\mathrm{f}}}{2} \vecnorm{\Bc(t_{\mathrm{f}}) - \Bc_{\mathrm{s}}}^{2} + \frac{R_{\mathrm{f}}}{2} (u_{2}(t_{\mathrm{f}}) - 1)^{2}.
\end{align}
Hence, the objective functional can be approximated as 
\begin{align}
&\calJ[u_{1},u_{2}] \cr
\approx{}& \tilde{\calJ}[u_{1},u_{2}] \cr
={}& \tilde{\varphi}(\Bc(t_{\mathrm{f}}),u_{2}(t_{\mathrm{f}})) \cr
&+ \int_{t_{0}}^{t_{\mathrm{f}}} \tilde{\calL}_{1}(\Bc) + \tilde{\calL}_{2}(\Bc,u_{1},u_{2}) + \calL_{3}(u_{1},u_{2}) dt.
\end{align}
The objective functional includes the product of $\Bc$ and $u_{1,2}$ but is still coercive.

Thus, the low-dimensional optimal control problem can be formulated with a Lagrange multiplier vector \mbox{$\Bmu(t) \in \mathbb{R}^{M}$} as 
\begin{align}
\label{eq:optimal_control_approx}
&\min_{u_{1},u_{2}} \quad \tilde{\calJ}[u_{1},u_{2}] + \int_{t_{0}}^{t_{\mathrm{f}}} \Bmu(t)^{\top} \left( \Bf(\Bc,u_{1},u_{2}) - \dot{\Bc}(t) \right) dt \cr
&\mathrm{s.t.} \quad \quad \Bc(t_{0}) = \Bc_{0},
\end{align}
given that the initial condition at $t_{0}$ for the state~$\Bc$ satisfies
\begin{align}
\label{eq:opt_initial}
\Bc(t_{0}) = \Bc_{0},
\end{align}
where $\Bc_{0}$ is the initial state corresponding to $\rho_{0}$ and 
its $m$-th component is obtained as \mbox{$(c_{0})^{(m)} = \funcinner[L^{2}\left(\Omega;\rho_{\mathrm{s}}^{-1}\right)]{\rho_{0}}{v_{m}}$}.
\begin{thm}
In the $M \to \infty$ limit, any optimal solution of the approximated problem~\eqref{eq:optimal_control_approx} converges to an optimal solution of the original problem~\eqref{eq:optimal_control}.
\end{thm}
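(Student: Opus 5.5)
We will argue by a standard Galerkin/$\Gamma$-convergence scheme. The goal is to show that the reduced problems~\eqref{eq:optimal_control_approx} are consistent approximations of~\eqref{eq:optimal_control}, in the spirit of the spectral reduction results of~\cite{Kalise2025spectral}.

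\textbf{Setup.} Let $P_{M}$ denote the $L^{2}(\Omega;\rho_{\mathrm{s}}^{-1})$-orthogonal projection onto $\mathrm{span}\{v_{0},\dots,v_{M-1}\}$. Since $\scrL^{*}$ is self-adjoint with compact resolvent in this space, $\{v_{m}\}$ is a complete orthonormal basis and $P_{M}\to I$ strongly. The reduced state equation~\eqref{eq:state_equation} is then exactly the Galerkin system $\pd{t}\rho_{M}=P_{M}\calF\{\rho_{M},u_{1},u_{2}\}$ with $\rho_{M}=\sum_{m<M}c_{m}v_{m}$ and $\rho_{M}(t_{0})=P_{M}\rho_{0}$. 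The reduced cost $\tilde{\calJ}$ is, up to the projection of $\omega$ onto the first $M$ modes, the original cost $\calJ$ evaluated at $\rho_{M}$.

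\textbf{Step 1: uniform bounds.} Let $(u_{1}^{M},u_{2}^{M})$ be optimal for the $M$-th reduced problem. Comparing with the admissible control $u\equiv 0$ gives $\tilde{\calJ}[u^{M}]\le \tilde{\calJ}[0]$, and the right-hand side is bounded uniformly in $M$ because $\|P_{M}\rho_{0}\|\le\|\rho_{0}\|$. Coercivity through $R_{1,2}>0$ then yields $\sup_{M}\|u^{M}\|_{L^{2}}<\infty$. Hence a subsequence converges weakly, $u^{M}\rightharpoonup \bar u$ in $L^{2}([t_{0},t_{\mathrm{f}}])^{2}$.

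\textbf{Step 2: energy estimates.} Testing the Galerkin equation with $\rho_{M}\rho_{\mathrm{s}}^{-1}$ gives bounds in $L^{\infty}(L^{2}(\Omega;\rho_{\mathrm{s}}^{-1}))\cap L^{2}(H^{1})$ for $\rho_{M}$, with constants depending only on $\|u^{M}\|_{L^{2}}$. The terms $u_{i}\zeta_{i}$ are first order and are absorbed by the dissipation $D\Delta$ via Young and Gronwall. This also bounds $\pd{t}\rho_{M}$ in $L^{2}((H^{1})')$. By Aubin--Lions, $\rho_{M}\to\bar\rho$ strongly in $L^{2}(L^{2})$ and in $C([t_{0},t_{\mathrm{f}}];(H^{1})')$.

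\textbf{Step 3: passing to the limit in the state equation.} The bilinear terms have the form $u^{M}_{i}\,\nabla\cdot(\rho_{M}\,\cdot)$, a product of a weakly convergent factor and a strongly convergent one. They therefore converge, when tested against fixed smooth functions. Hence $\bar\rho$ solves~\eqref{eq:fpe_control} with control $\bar u$. Well-posedness of~\eqref{eq:fpe_control}, which we assume, then identifies $\bar\rho=\rho[\bar u]$, so the whole subsequence has a unique limit.

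\textbf{Step 4: liminf and limsup inequalities.} This is the main obstacle.

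For the liminf inequality we use weak lower semicontinuity. The input cost is convex in $u$. The $\rho$-terms converge by strong convergence. The difficult term is the vorticity cost, because $\omega_{M}$ contains $\nabla\rho_{M}$ and the products $u_{i}\rho_{M}$, and $\nabla\rho_{M}$ converges only weakly in $L^{2}(L^{2})$. Since $\calL_{2}$ is a convex quadratic in $(\nabla\rho,u\rho)$, we still obtain
\begin{align*}
\liminf_{M\to\infty}\tilde{\calJ}[u^{M}]\ge\calJ[\bar u].
\end{align*}
The projection of $\omega$ and $\omega_{\mathrm{d}}$ onto $M$ modes is handled by $P_{M}\to I$ strongly. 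The terminal term $(u_{2}(t_{\mathrm{f}})-1)^{2}$ needs a little care. We will interpret it as the limit of the terminal vorticity mismatch, or alternatively restrict the controls to $H^{1}$ so that point evaluation makes sense.

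For the limsup inequality, take any control $u$. The same Steps 2--3, now with a fixed $u$, give $\rho_{M}[u]\to\rho[u]$ strongly in $L^{2}(H^{1})$. Strong convergence holds because, for a fixed control, the Galerkin error satisfies a Céa-type estimate controlled by $\|(I-P_{M})\rho[u]\|$. Hence $\tilde{\calJ}[u]\to\calJ[u]$.

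\textbf{Conclusion.} Combining the two inequalities,
\begin{align*}
\calJ[\bar u]\le\liminf_{M\to\infty}\tilde{\calJ}[u^{M}]\le\limsup_{M\to\infty}\tilde{\calJ}[u]=\calJ[u]
\end{align*}
for every admissible $u$. Hence $\bar u$ is optimal for~\eqref{eq:optimal_control}.
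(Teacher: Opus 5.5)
Your route differs from the paper's and is far more careful. The paper only asserts that $\tilde{\calL}_{1}$, $\tilde{\calL}_{2}$, $\tilde{\varphi}$ converge to $\calL_{1}$, $\calL_{2}$, $\varphi$, and that uniform coercivity makes minimizers bounded and hence convergent. Your Steps 1--3 supply the compactness and the identification of the limit state that the paper skips, and they are sound.

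The argument breaks at the terminal term, and this cannot be repaired inside the proof. The reduced cost contains $\frac{R_{\mathrm{f}}}{2}(u_{2}(t_{\mathrm{f}})-1)^{2}$, which is not a Galerkin truncation of $\frac{R_{\mathrm{f}}}{2}\|\omega(\cdot,t_{\mathrm{f}})-\omega_{\mathrm{d}}\|^{2}$. At $\rho=\rho_{\mathrm{s}}$ one has $\omega-\omega_{\mathrm{d}}=u_{1}\nabla^{\bot}\rho_{\mathrm{s}}\cdot\nabla\alpha+(u_{2}-1)\Delta\phi$. So the two agree, up to the factor $\|\Delta\phi\|^{2}$, only if also $u_{1}(t_{\mathrm{f}})=0$, and for $\rho(t_{\mathrm{f}})\neq\rho_{\mathrm{s}}$ they are unrelated. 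For instance, at $u\equiv 0$ the reduced term equals $R_{\mathrm{f}}/2$ for every $M$, while the original one is $\frac{R_{\mathrm{f}}}{2}\|\nabla^{\bot}\rho(\cdot,t_{\mathrm{f}})\cdot\nabla V-\omega_{\mathrm{d}}\|^{2}$. Hence your limsup claim $\tilde{\calJ}[u]\to\calJ[u]$ is false, and the liminf has no basis for this term either.

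Neither of your repairs works. The surrogate is not a limit of the terminal vorticity mismatch. Restricting to $H^{1}$ controls breaks Step 1, because the cost bounds only $\|u\|_{L^{2}}$ and so gives neither $H^{1}$ compactness nor control of $u_{2}^{M}(t_{\mathrm{f}})$. For $L^{2}$ controls the point value is not even defined, and changing $u_{2}$ on a null set makes the surrogate vanish at no cost. The paper's proof makes the same unsupported claim ($\tilde{\varphi}\to\varphi$ ``similarly''), so this is a defect of the statement. A correct theorem needs the consistent reduced term $\frac{R_{\mathrm{f}}}{2}\|P_{M}(\omega_{M}(\cdot,t_{\mathrm{f}})-\omega_{\mathrm{d}})\|^{2}$ and a meaningful treatment of the control value at $t_{\mathrm{f}}$.

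Two further steps need a real argument rather than an appeal to convexity.

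\emph{Liminf.} The troublesome terms are $u_{1}\nabla^{\bot}\rho_{M}\cdot\nabla\alpha$ and $u_{2}\nabla^{\bot}(\rho_{M}/\rho_{\mathrm{s}})\cdot\nabla^{\bot}\phi$, that is, $u_{i}$ times $\nabla\rho_{M}$, not $u_{i}\rho_{M}$. These are products of two merely weakly convergent sequences, and your Step 2 bounds place them only in $L^{1}(L^{2})$. So weak lower semicontinuity is not yet available. What works is to use the $L^{2}(L^{2})$ bound on $P_{M}(\omega_{M}-\omega_{\mathrm{d}})$ that the cost provides, and test it against $\psi(t)v_{k}$. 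Since $u_{i}$ depends only on $t$, $u_{i}\nabla\rho_{M}=\nabla(u_{i}\rho_{M})$, so you can integrate by parts in $\Bx$. The boundary terms involve traces of $\rho_{M}$, which converge strongly in $L^{2}(L^{2}(\Gamma))$ by Aubin--Lions into $H^{s}$, $s>1/2$. Every product is then weak times strong, and the weak limit is identified mode by mode.

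\emph{Limsup.} Convergence of $\rho_{M}[u]$ in $L^{2}(H^{1})$ does not control $\int|u|^{2}\|\nabla(\rho_{M}[u]-\rho[u])\|^{2}dt$ for $u\in L^{2}$. You need an $H^{1}$-level energy estimate to get convergence in $L^{\infty}(H^{1})$. This requires $\rho_{0}\in H^{1}$ and uses that the spectral projection $P_{M}$ is stable in $H^{1}$.

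\emph{Mode of convergence.} You obtain only subsequential weak convergence. It upgrades to strong convergence because $\tilde{\calJ}[u^{M}]\to\calJ[\bar u]$ forces $\|u^{M}\|_{L^{2}}\to\|\bar u\|_{L^{2}}$. Without uniqueness of the original minimizer, however, ``converges'' can only mean along subsequences.
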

\begin{proof}
Since the eigenfunction expansion of $\rho$ up to $m = M-1$ converges to $\rho$ as $M \to \infty$ in $L^{2}\left(\Omega;\rho_{\mathrm{s}}^{-1}\right)$ space~\cite{Pavliotis2014stochastic},
the cost function~$\tilde{\calL}_{1}$ converges to the cost functional~$\calL_{1}$.
Since the flux~$\tilde{\BJ}$ can be expanded as Eq.~\eqref{eq:flux_expansion} and \mbox{$\omega(\Bx,t) = -\vecinner{\nabla^{\bot}}{\tilde{\BJ}(\Bx,t)}$},
the cost function~$\tilde{\calL}_{2}$ converges to the cost functional~$\calL_{2}$.
Similarly, the cost function~$\tilde{\varphi}$ converges to the cost functional~$\varphi$.
Therefore, the value of $\tilde{\calJ}$ converges to that of $\calJ$.
Since $\tilde{\calJ}$ is coercive independently of $M$, any optimal solution of \eqref{eq:optimal_control_approx} remains bounded for all $M$ and converges to an optimal solution of \eqref{eq:optimal_control}.
\end{proof}

Defining the Hamiltonian~$H(\Bc,u_{1},u_{2},\Bmu)$ by
\begin{align}
H(\Bc,u_{1},u_{2},\Bmu) ={}& \tilde{\calL}_{1}(\Bc) + \tilde{\calL}_{2}(\Bc,u_{1},u_{2}) + \calL_{3}(u_{1},u_{2}) \cr
&+ \Bmu(t)^{\top}\Bf(\Bc,u_{1},u_{2}),
\end{align}
the dynamics of $\Bmu$ can be described by the following ODE:
\begin{align}
\label{eq:opt_adjoint}
\dot{\Bmu} &= -\nabla_{\Bc} H(\Bc,u_{1},u_{2},\Bmu)
\end{align}
and $\Bmu$ satisfies the terminal condition:
\begin{align}
\label{eq:opt_final}
\Bmu(t_{\mathrm{f}}) = \nabla_{\Bc} H(\Bc(t_{\mathrm{f}}),u_{2}(t_{\mathrm{f}})),
\end{align}
where $\nabla_{\Bc}$ is the gradient with respect to $\Bc$.

Since it is difficult to obtain the optimal solution analytically, we numerically obtained it using the MATLAB optimization solver~\texttt{fminunc}.
At each step of the optimization, the state equation~\eqref{eq:state_equation} is first numerically integrated from \mbox{$t = t_{0}$} to \mbox{$t = t_{\mathrm{f}}$} using the Euler method with the initial condition~\eqref{eq:opt_initial},
and then the ODE for $\Bmu$ given by~\eqref{eq:opt_adjoint} is solved from \mbox{$t = t_{\mathrm{f}}$} to \mbox{$t = t_{0}$} with the terminal condition~\eqref{eq:opt_final},
to evaluate the value of the objective functional~$\tilde{\calJ}$.
Finally, we obtained a local optimal solution that satisfies the stationarity condition of the Hamiltonian with respect to the inputs, i.e., $\partial H / \partial u_{1,2} = 0$ for all \mbox{$t \in [t_{0},t_{\mathrm{f}}]$}.
Although this optimization problem is not convex, the formulated optimization problem requires low computational cost owing to the dimensionality reduction of the original problem.

\section{RESULTS}
\label{sec:results}

We demonstrate through numerical simulations that the FPE~\eqref{eq:fpe_control} under the optimal control obtained by the proposed formulation sufficiently achieves the control objective.
As an example, we consider a controlled diffusion process~\eqref{eq:sde_input} with the potential~\mbox{$V(\Bx) = 2x^{2} + 3y^{2}$} and diffusion constant~\mbox{$D = 2$}.
Without control, the FPE~\eqref{eq:fpe} has a stationary PDF~$\rho_{\mathrm{s}}$ as shown in Fig.~\ref{fig:stationary}
on a domain~\mbox{$\Omega = [-L,L] \times [-L,L]$} with a length~\mbox{$L = 4$}.
For numerical simulations, the $x$ and $y$ components are discretized with intervals~\mbox{$\Delta_{x} = 0.08$} and \mbox{$\Delta_{y} = 0.08$}, respectively.

The control objective is to accelerate convergence toward the stationary distribution~$\rho_{\mathrm{s}}$ and generate a desired flux rotation~$\omega_{\mathrm{d}}$ as shown in Fig.~\ref{fig:rotation}
from the initial time~\mbox{$t_{0} = 0$} until the final time \mbox{$t_{\mathrm{f}} = 1$}.
Here, the control shape functions are set as 
\begin{align}
\alpha(\Bx) = \cos\left( \frac{\pi x}{2L} \right) \cos\left( \frac{\pi y}{2L} \right) 
\end{align}
and 
\begin{align}
\phi(\Bx) = \rho_{\mathrm{s}} \frac{L^2}{4} \left( \frac{x^{2}}{L^{2}} - 1 \right) \left( \frac{y^{2}}{L^{2}} - 1 \right),
\end{align}
which satisfy the boundary conditions.
The initial distribution is taken as a mixture of two truncated and normalized Gaussian PDFs with equal weights;
one has a mean~$[1~0]^{\top}$ and covariance matrix~$\diag(0.5,0.5)$ and the other has a mean~$[-1~0]^{\top}$ and the same covariance.
We approximate the PDF driven by the controlled FPE~\eqref{eq:fpe_control} using \mbox{$M = 21$} eigenfunctions corresponding to the \mbox{$M = 21$~largest real eigenvalues}.

We performed optimization with the weights~$Q_{\mathrm{f}} = 10^{2}$, $R_{\mathrm{f}} = 1$, $Q_{1} = 10^{4}$, $Q_{2} = 10$, $R_{1} = 1$, and $R_{2} = 1$
from an initial guess~\mbox{$u_{1}(t) = 0$} and \mbox{$u_{2}(t) = 1$}.
In the optimization, we numerically integrate the ODEs~\eqref{eq:state_equation} and \eqref{eq:opt_adjoint} by the Euler method with a time step~\mbox{$\Delta_{t} = 5 \times 10^{-3}$}.
We show the optimal control inputs in Figs.~\ref{fig:input}(a) and (b).
The input~$u_{1}$ increases from a large negative value to $0$ in the initial stage, indicating that it is optimal to apply a large input in the early state to accelerate the convergence of $\rho$ towards $\rho_{\mathrm{s}}$.
Meanwhile, the input~$u_{2}$ increases from $0$ and converges to $1$ after the initial stage, indicating that it is optimal to generate the desired circulation when the PDF sufficiently convergences to the target $\rho_{\mathrm{s}}$.

We numerically integrated the FPE~\eqref{eq:fpe_control} using the obtained optimal control inputs by the implicit method with the time step~\mbox{$\Delta_{t}$}.
We compare the optimal control and uncontrolled cases by calculating the \mbox{$L^{2}\left(\Omega;\rho_{\mathrm{s}}^{-1}\right)$}~norm~\mbox{$e_{\rho} = \funcnorm[L^{2}\left(\Omega;\rho_{\mathrm{s}}^{-1}\right)]{\rho(\cdot,t) - \rho_{\mathrm{s}}}$} for the PDF and 
\mbox{$L^{2}\left(\Omega;\rho_{\mathrm{s}}^{-1}\right)$}~norm~\mbox{$e_{\omega} = \funcnorm[L^{2}\left(\Omega;\rho_{\mathrm{s}}^{-1}\right)]{\omega(\cdot,t) - \omega_{\mathrm{d}}}$} for the flux rotation.
We can find that the control objective is successfully achieved by using the optimal control inputs.
We note that the initial values of the $e_{\omega}$ are different between the optimal control and uncontrolled cases because the flux rotation is affected by the control inputs.

Furthermore, we numerically integrated the time evolution of $10^{5}$ independent particles governed by the controlled SDE~\eqref{eq:sde_input} using the Euler--Maruyama method with the time step $\Delta_{t}$.
We then computed the flux at the final time, which can be obtained from the particle states
by calculating the drift term by the sample mean of the displacement at each grid domain and approximating the histogram by a truncated Gaussian mixture model.
We show the flux under optimal control at the final time in Fig.~\ref{fig:velocity}, 
where the flux successfully exhibits clockwise rotation around the origin, corresponding to $\omega_{\mathrm{d}}$ being negative near the origin.

\begin{figure}
\centering
\includegraphics[width=0.34\textwidth]{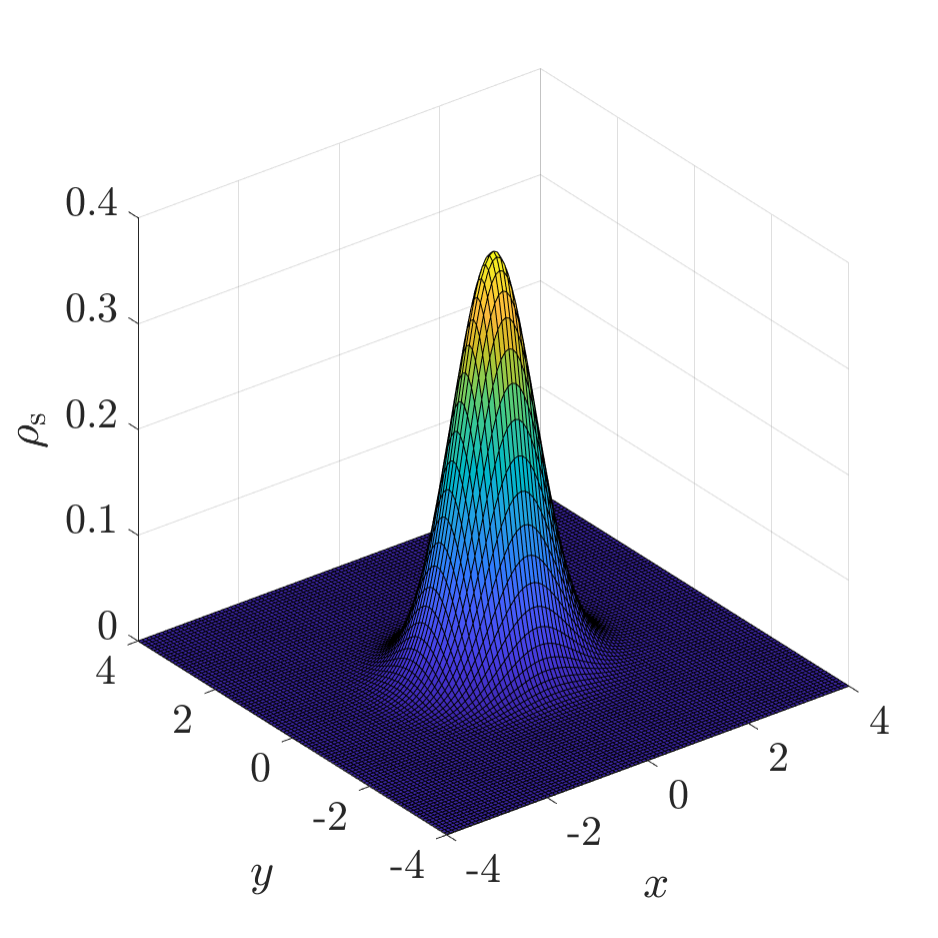}
\caption{Stationary distribution~$\rho_{\mathrm{s}}$ of the uncontrolled FPE.}
\label{fig:stationary}
\end{figure}

\begin{figure}
\centering
\includegraphics[width=0.34\textwidth]{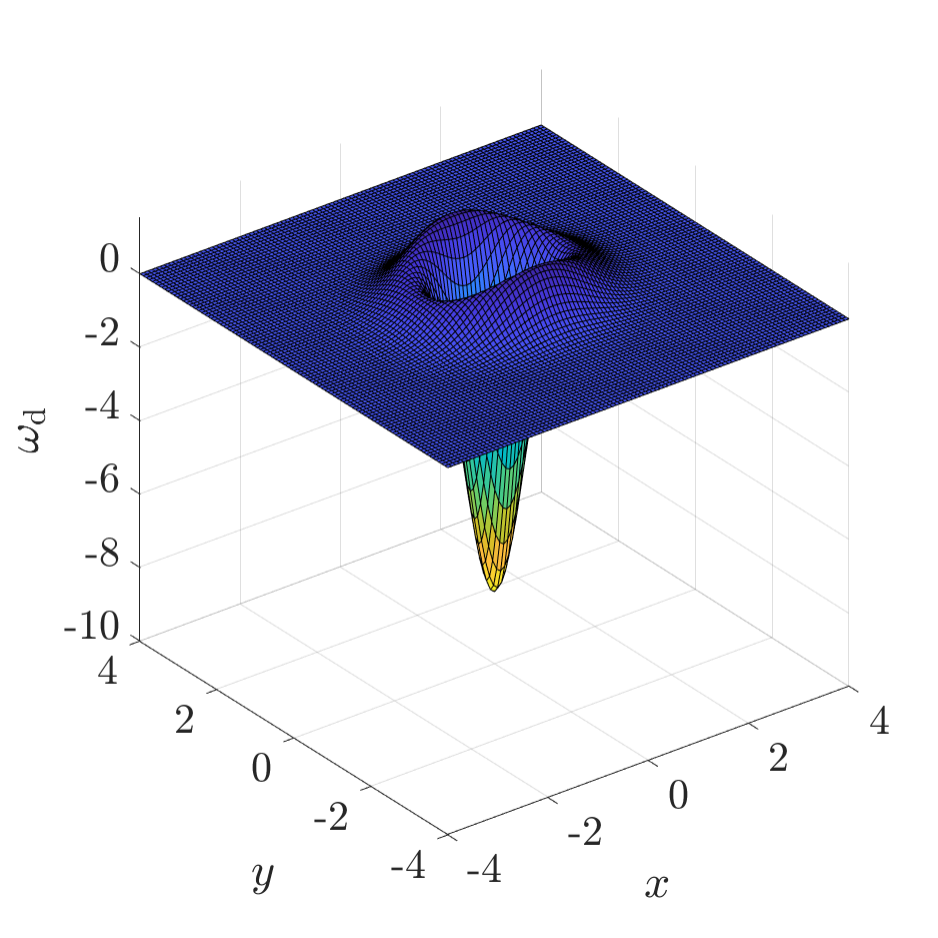}
\caption{Desired flux rotation~$\omega_{\mathrm{d}}$.}
\label{fig:rotation}
\end{figure}

\begin{figure}
\centering
\includegraphics[width=0.48\textwidth]{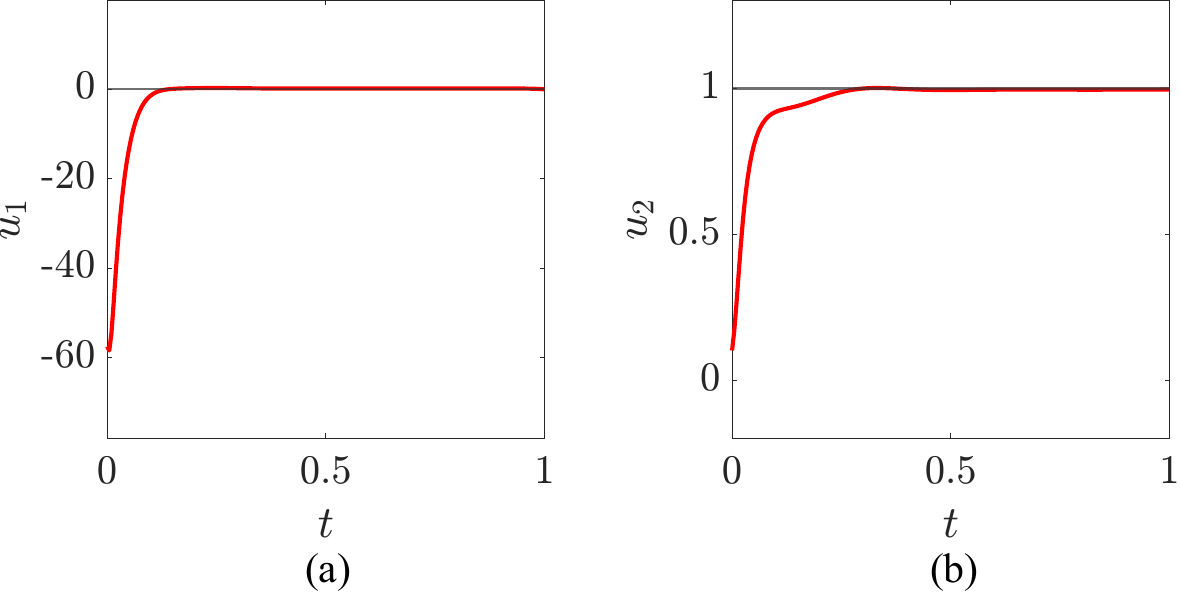}
\caption{
Optimal control inputs.
(a)~Optimal control input~$u_{1}$.
The horizontal line represents $u_{1}(t) = 0$.
(b)~Optimal control input~$u_{2}$.
The horizontal line represents $u_{2}(t) = 1$.
}
\label{fig:input}
\end{figure}

\begin{figure}
\centering
\includegraphics[width=0.48\textwidth]{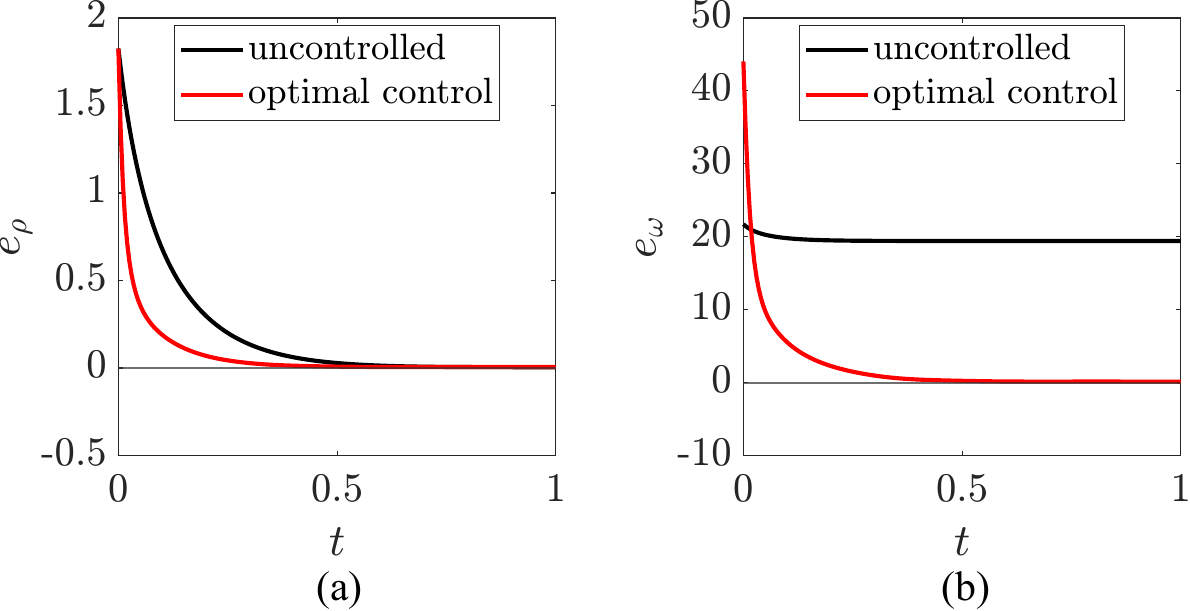}
\caption{
Comparison of the \mbox{$L^{2}\left(\Omega;\rho_{\mathrm{s}}^{-1}\right)$}~norm between the uncontrolled case and optimal control case.
(a)~\mbox{$L^{2}\left(\Omega;\rho_{\mathrm{s}}^{-1}\right)$}~norm between the stationary and current distribution.
(b)~\mbox{$L^{2}\left(\Omega;\rho_{\mathrm{s}}^{-1}\right)$}~norm between the desired and current flux rotation.
}
\label{fig:L2_norm}
\end{figure}

\begin{figure}
\centering
\includegraphics[width=0.48\textwidth]{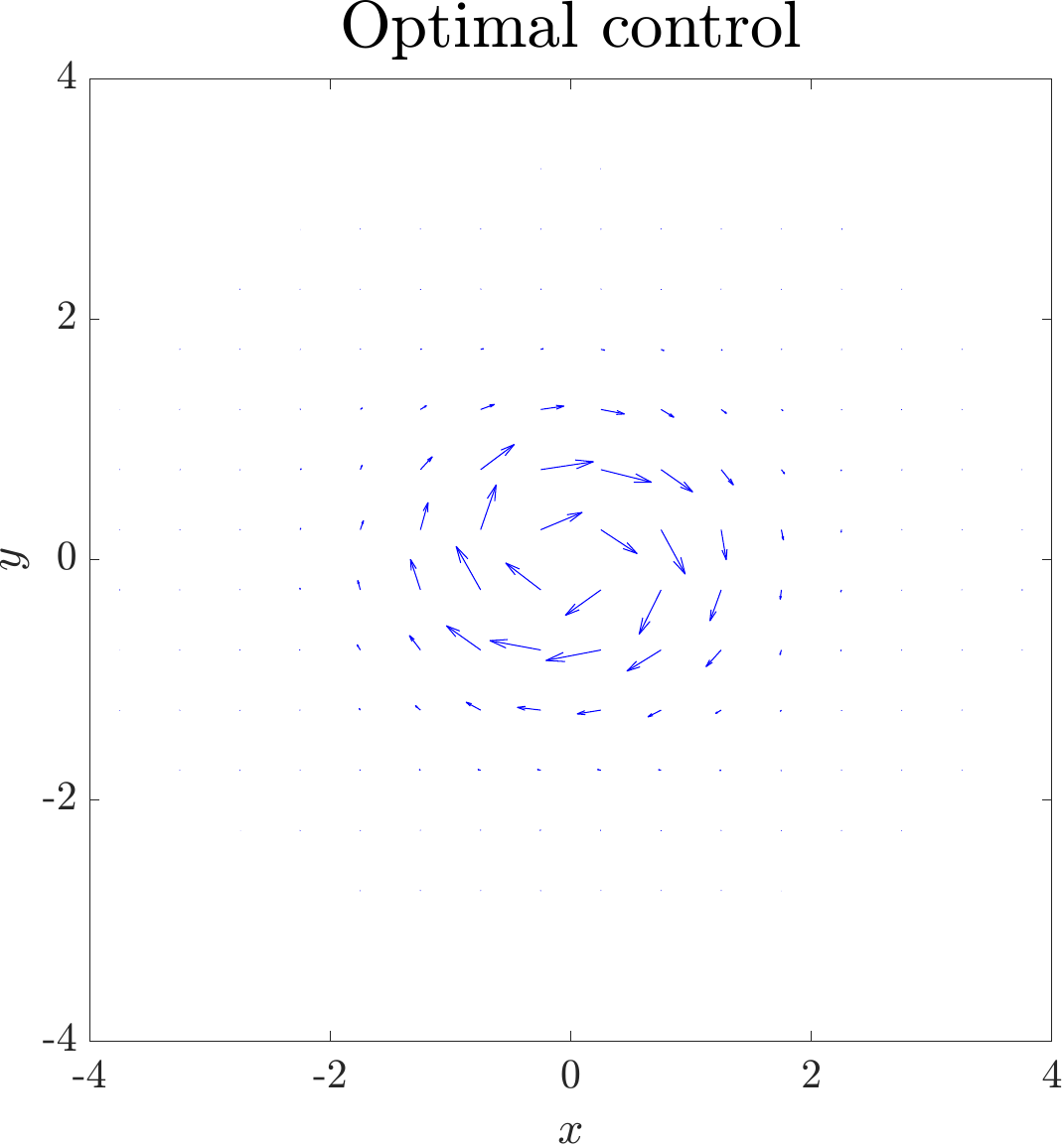}
\caption{
The behavior of the flux obtained from the particles under the optimal control at the final time on the $xy$ plane.
The length of the arrows represents the flux speed.
}
\label{fig:velocity}
\end{figure}

\section{CONCLUSIONS}

\label{sec:conclusion}

In this study, we formulated an optimal control problem for a diffusion process described by a two-dimensional It\^{o} SDE to achieve a desired circulation while accelerating convergence to the stationary distribution of the original circulation-free Fokker--Planck equation.
By the formulation of the optimal control problem based on dimensionality reduction via the eigenfunction expansion of the probability density function,
we demonstrated that the control objective can be achieved through numerical simulations.
Although the desired flux rotation can also be achieved while ensuring convergence to the stationary distribution by the trivial choice of the control inputs, \mbox{$u_{1}(t) = 0$} and \mbox{$u_{2}(t) = 1$}, 
it is worthwhile to solve the optimal control problem 
because a different and more efficient optimal solution (Fig.~\ref{fig:input}) was obtained when the trivial choice was used as the initial guess in the numerical example.
A future work would be to formulate optimal control problems for the similar control objective in three or higher-dimensional cases and 
investigate a nonlinear FPE under particle interactions.


\bibliographystyle{ieeetran}
\bibliography{references}

\end{document}